\def\ps@pprintTitle{%
 \let\@oddhead\@empty
 \let\@evenhead\@empty
 \def\@oddfoot{\centerline{\thepage}}%
 \let\@evenfoot\@oddfoot}
\newtheorem{Theorem}{Theorem}[section]
\newtheorem{Lemma}[Theorem]{Lemma}
\newtheorem{Proposition}[Theorem]{Proposition}
\newtheorem{Definition}[Theorem]{Definition}
\newtheorem{Remark}[Theorem]{Remark}
\theoremstyle{definition}     
\newtheorem{Example}[Theorem]{Example}
\newcommand{\N}{\mathbb{N}}
\newcommand{\K}{\mathbb{K}}
\newcommand{\I}{\mathcal{I}}
\newcommand{\J}{\mathcal{J}}
\newcommand{\V}{\mathbb{V}}
\DeclareMathOperator{\Res}{Res}
\DeclareMathOperator{\lc}{lc}
\newcommand{\ie}{{\it i.e.}}
\begin{document}

\begin{frontmatter}

\title{On Existence and Uniqueness of Formal Power Series Solutions of Algebraic Ordinary Differential Equations\tnoteref{thank}}

\author[RISC]{Sebastian Falkensteiner\corref{cor2}}
\ead{falkensteiner@risc.jku.at}
\author[XJLU]{Yi Zhang\corref{cor2}}
\cortext[cor2]{Co-first authors}
\ead{Yi.Zhang03@xjtlu.edu.cn}
\author[TDTU]{Thieu N. Vo\corref{cor1}}
\cortext[cor1]{Corresponding author}
\ead{vongocthieu@tdtu.edu.vn}

\address[RISC]{Research Institute for Symbolic Computation (RISC), Johannes Kepler University Linz, Austria}
\address[XJLU]{Department of Applied Mathematics, School of Science, Xi'an Jiaotong-Liverpool University, Suzhou, 215123, China}
\address[TDTU]{Fractional Calculus, Optimization and Algebra Research Group, Faculty of Mathematics and Statistics, Ton Duc Thang University, Ho Chi Minh City, Vietnam}

\tnotetext[thank]{This research is funded by Vietnam National Foundation for Science and Technology Development (NAFOSTED) under grant number 101.04-2019.06; supported by the Austrian Science Fund (FWF): P 31327-N32; the UTD start-up grant: P-1-03246, the Natural Science Foundation of USA grants CCF-1815108 and CCF-1708884; XJTLU Research Development Funding RDF-20-01-12.}

\begin{abstract}
Given an algebraic ordinary differential equation (AODE), we propose a computational method to determine when a truncated power series can be extended to a formal power series solution. 
If a certain regularity condition on the given AODE or on the initial values is fulfilled, we compute all of the solutions.
Moreover, when the existence is confirmed, we present the algebraic structure of the set of all formal power series solutions.
\end{abstract}

\begin{keyword}
formal power series \sep algebraic differential equation.
  
\MSC[2010] 34A05\sep 34A09\sep 68W30 
\end{keyword}

\end{frontmatter}

\section{Introduction}
The problem of finding formal power series solutions of algebraic ordinary differential equations (AODEs) has a long history and it has been extensively studied in literature. 
In~\cite{BriotBouquet}, Briot and Bouquet use the Newton polygon method to study the singularities of first-order and first degree ODEs. 
Fine gave a generalization of the method for arbitrary order AODEs in~\cite{Fine1889}. 
In~\cite{GrSi:1991}, Grigoriev and Singer proposed a parametric version of the Newton polygon method and use it to study formal power series solutions of AODEs with real exponents. 
Further study of the parametric Newton polygon method can be found in~\cite{Cano2005} and~\cite{CanoFortuny2009}. 
The Newton polygon method has its computational limits, which is illustrated in~\cite{DoraJung1997}.
For a more recent exposition about the history of the Newton polygon method, we refer to~\cite{Dragovic2019}. 
Bruno presented in~\cite{bruno2000power,bruno2004asymptotic} another method by using a polygon construction and power transformations to derive more general types of solutions such as power series with complex exponents, power-logarithmic series. 
In general, no a-priori computational bounds are presented for deciding existence and uniqueness of those solutions.
But we want to emphasize that if a certain condition is satisfied by the constructed polygon, such bounds can be given (see~\cite{Cano2005,bruno2004asymptotic}).

Van der Hoeven gave important contributions on bounding the number of initial conditions needed so that uniqueness of the formal power series solution is guaranteed (which is called ``root separation'' \cite{Hoeven2019}). 
The bounds of the root separation lead to an algorithm to decide whether a given effective formal power series is a solution of a given AODE.
However, the bounds essentially depend on the formal power series and they can be arbitrary large or even equal to infinity.
In general, one need to know a finite representation of all coefficients in order to determine such a bound.
Further study in this aspect can be found in the works of J. van der Hoeven and his collaborators, such as \cite{Aschenbrenner2017, Hoeven2006}.

In the case of AODEs of order one with constant coefficients, existence and uniqueness can be decided and all formal power series solutions can be found, see~\cite{FalkSendra}, and its generalization to formal power series solutions with fractional exponents in~\cite{CanoFalkSendra}.

This paper concerns the computation of formal power series solution of an AODE with a certain number of initial conditions.
In particular, we study the problem of deciding when a given truncated power series can be extended to a formal power series solution and compute all of them in the affirmative case. 
We follow the method inherited from the work by Denef and Lipshitz~\cite{Denef1984}.
There the authors give an expression of the derivatives of a differential polynomial with respect to the independent variable in terms of lower order differential polynomials (see~\cite[Lemma 2.2]{Denef1984} and originally \cite[page 328--329]{Hurwitz1889}). 
Our first contribution is to enlarge the class of differential equations where all formal power series solutions with a given initial value can be computed algorithmically. 
This class is given by a sufficient condition on the given differential equation and initial value which is described by the local vanishing order. 
Moreover, we give a necessary and sufficient condition on the given differential equation such that for every initial value all formal power series solutions can be computed in this way. 
For differential equations satisfying this condition, we give an algorithm to compute all formal power series solutions up to an arbitrary order and illustrate it by some examples.

We note that Denef and Lipshitz provided an algorithmic method to decide the existence of an AODE with initial conditions in \cite{Denef1984}.
However, the computation of all such solutions was not the concern of their work.
Moreover, the solutions were assumed to be non-singular, and the singular solutions were avoided.
In our method, both singular and non-singular solutions are computed.
We also note that some of the intermediate results and reasonings in our work could be derived by using the Newton polygon method or other polygon constructions, but we follow a completely different strategy and give independent proofs.

The rest of the paper is organized as follows. 
Section~\ref{sec-IFT} is devoted to present a well-known formula of Ritt (Lemma~\ref{LEM:Ritt}) which can be used for finding formal power series solutions by using coefficient comparison, see Proposition~\ref{PROP:ImplicitFunctionTheorem}.
Since not all formal power series solutions can be found in this way (For instance, see Example~\ref{EX:mnonvanishing}), one may use a refinement of Ritt's formula presented in~\cite{Hurwitz1889, limonov2015generalized}. 
We summarize it by Theorem~\ref{THM:generalizedSeparantFormula} in Section~\ref{sec-generalized-separants}. 
In order to simplify some of the subsequent reasonings, we also use a slightly different notation and define \textit{separant matrices}. 
Moreover, we give some sufficient conditions on the given differential equation, which is called the \textit{vanishing order}, such that the refined formula can be used in an algorithmic way for computing all formal power series solutions and present new results in this direction, see Theorems~\ref{THM:local}, \ref{THM:global} and Algorithm~\ref{ALGO:localnonvanishing}. 
In Section~\ref{subsec-locally} we focus on solutions with given initial values and study the vanishing order locally, whereas in Section~\ref{subsec-globally} we generalize the vanishing order to arbitrary initial values. 
For the global situation, Proposition~\ref{PROP:specialnonvanishing} and~\ref{PROP:charVanishing} show that a large class of AODEs indeed satisfy our sufficient conditions.

\section{Implicit Function Theorem for AODEs} \label{sec-IFT}
Let $\K$ be an algebraically closed field of characteristic zero.
For the computational purpose, we can choose $\K$ to be the field $\overline{\mathbb{Q}}$ of algebraic numbers or the field $\mathbb{C}$ of complex numbers. 
We consider an algebraic ordinary differential equation (AODE) of the form 
\begin{equation}\label{eq:AODE}
F(x,y,y',\ldots,y^{(n)})=0,
\end{equation}
where $n \in \N$ and $F$ is a differential polynomial in $\K[x]\{y\}$ of order $n$. 
For simplicity we may also write~\eqref{eq:AODE} as $F(y) = 0$ and call $n$ the \textit{order} of~\eqref{eq:AODE}.

Let $\K[[x]]$ be the ring of formal power series with respect to $x$ around the origin. 
For each formal power series $f \in \K[[x]]$ and $k \in \N$, we use the notation $[x^k] f$ to refer to the coefficient of $x^k$ in $f$. 
The coefficient of $x^k$ in a formal power series can be expressed by means of the constant coefficient of its $k$-th formal derivative, as stated in the following lemma (see~\cite[Theorem 2.3, page 20]{KauersPaule2010}).  
\begin{Lemma} \label{LEM:kthcoeff}
Let $f \in \K[[x]]$ and $k \in \N$. 
Then $[x^k] f = [x^0] \left(\frac{1}{k!}\,f^{(k)} \right)$.
\end{Lemma}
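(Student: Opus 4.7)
The plan is to prove this by direct computation using the power series expansion of $f$. First I would write $f = \sum_{n=0}^{\infty} a_n x^n$ where $a_n = [x^n] f$ by definition, so the goal becomes showing that $a_k = [x^0]\bigl(\tfrac{1}{k!} f^{(k)}\bigr)$.

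Next I would compute $f^{(k)}$ termwise. Since formal differentiation of $\K[[x]]$ kills every monomial of degree less than $k$ and sends $x^n$ (for $n \geq k$) to $\frac{n!}{(n-k)!}\,x^{n-k}$, I obtain
\begin{equation*}
f^{(k)} \;=\; \sum_{n=k}^{\infty} \frac{n!}{(n-k)!}\, a_n\, x^{n-k} \;=\; \sum_{m=0}^{\infty} \frac{(m+k)!}{m!}\, a_{m+k}\, x^{m},
\end{equation*}
after reindexing with $m = n-k$. Reading off the constant term gives $[x^0] f^{(k)} = k!\, a_k$, and dividing by $k!$ yields the claimed identity.

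There is essentially no obstacle here beyond verifying that termwise differentiation is legitimate in $\K[[x]]$; this is standard since the derivation on $\K[[x]]$ is defined precisely by the formula $(x^n)' = n\, x^{n-1}$ extended linearly, and iterating $k$ times only involves finitely many terms in each fixed coefficient slot. Thus the argument reduces to bookkeeping of factorials and no convergence issue arises, the characteristic zero assumption being used only to ensure that $k!$ is invertible in $\K$.
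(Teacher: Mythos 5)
Your computation is correct and is the standard termwise-differentiation argument; the paper itself gives no proof of this lemma, deferring instead to a citation of Kauers and Paule, so there is nothing to diverge from. Your closing remark is also accurate: the identity $[x^0]f^{(k)} = k!\,[x^k]f$ holds over any commutative ring, and characteristic zero is needed only to divide by $k!$.
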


Let $n \in \N$. 
We define the projection map $\pi_n$ as
\[
\begin{array}{cccc}
\pi_{n}: & \K^{\N} & \longrightarrow & \K^{n+1} \\
& (c_0, c_1, \ldots) & \longmapsto & (c_0, \ldots, c_n).
\end{array}
\]

Assume that $y(x)=\sum_{i \geq 0} \frac{c_i}{i!}\,x^i \in \K[[x]]$ is a formal power series solution of $F(y) = 0$ around the origin, where $c_i \in \K$ are unknowns and $F$ is of order $n$. 
Set $\textbf{c}=(c_0,c_1,\ldots) \in \K^\N$. 
By Lemma~\ref{LEM:kthcoeff}, we know that $F(y(x))=0$ if and only if $$[x^0](F^{(k)}(y(x))) = F^{(k)}(0,\pi_{n+k}(\textbf{c}))=0 \text{ for each } k \geq 0.$$
The above fact motivates the following definition.

\begin{Definition} \label{DEF:jet}
Let $F \in \K[x]\{y\}$ be a differential polynomial of order $n \in \N$. 
\begin{enumerate}
	\item Assume that $\textbf{c}=(c_0,c_1,\ldots) \in \K^\N$ is an infinite tuple of indeterminates and $m \in \N$. 
	We call the ideal 
	$$\mathcal{J}_m(F)=\langle F(0,\pi_{n}(\textbf{c})),\ldots,F^{(m)}(0,\pi_{n+m}(\textbf{c})) \rangle \subseteq \K[c_0,\dots,c_{n+m}]$$
	 the $m$-th \textit{jet ideal} of $F$. 
	Moreover, we denote the zero set of $\mathcal{J}_m(F)$ by $\mathbb{V}(\mathcal{J}_m(F)) \subseteq \K^{n+m+1}$.
	\item Let $k \in \N$. 
	Assume that $\tilde{\textbf{c}}=(c_0,c_1,\ldots,c_{k})\in \K^{k+1}$ and $\tilde{y}(x)= \sum_{i=0}^k \frac{c_{i}}{i!}\,x^{i}$.
	We say that $\tilde{\textbf{c}}$, or $\tilde{y}(x)$, can be extended to a formal power series solutions of $F(y)=0$ if there exists $y(x) \in \K[[x]]$ such that $F(y(x)) = 0$ and $$y(x) \equiv \tilde{y}(x) \mod x^{k+1}.$$
\end{enumerate}
\end{Definition}

By the above definition, we know that $y(x) \in \K[[x]]$ is a solution of $F(y)=0$ if and only if $\pi_{n+k}(\textbf{c}) \in \V(\J_k(F))$ for each $k \geq 0$, where $\textbf{c} \in \K^{\N}$ denotes the infinite tuple of coefficients of $y(x)$. 
Since it is impossible to check the latter relation for each $k \in \N$, we need to find an upper bound $k_0$ such that $\pi_{n+k}(\textbf{c}) \in \V(\J_k(F))$ for each $k \ge k_0$ or $\pi_{n+k_0}(\textbf{c}) \not\in \V(\J_{k_0}(F))$.
The following example shows that in general this bound, if it exists, can be arbitrarily large.

\begin{Example} \label{EX:mnonvanishing}
For each positive integer $m$, consider the AODE $$F=x\,y'-m\,y+x^{m} = 0$$ with the initial tuple $(c_0,c_1)=(0,0) \in \V(\J_0(F))$. 
For every $0 \leq k<m$, we have $$F^{(k)}=x\,y^{(k+1)}+(k-m)\,y^{(k)}+m(m-1) \cdots (m-k+1)\,x^{m-k}.$$ 
Therefore, we have that $(c_0,\ldots,c_{k+1}) \in \V(\J_{k})$ for all $0 \leq k<m$ if and only if $c_0=\cdots=c_{k}=0$. 
However, $$F^{(m)}(0,\ldots,0,c_{m},c_{m+1})=m! \neq 0$$ and $(c_0,c_1)=(0,0)$ cannot be extended to a formal power series solution of $F(y)=0$.
\end{Example}

Note that Example~\ref{EX:mnonvanishing} does not exclude the existence of such an upper bound in terms of the coefficients and exponents of $F$. 
Let us first recall a lemma which shows that for $k \geq 1$ the highest occurring derivative appears linearly in the $k$-th derivative of $F$ with respect to $x$ (see~\cite[page 30]{ritt1950differential}).

\begin{Lemma} \label{LEM:Ritt}
Let $F \in \K[x]\{y\}$ be a differential polynomial of order $n \geq 0$. 
Then for each $k \geq 1$, there exists a differential polynomial $R_k \in \K[x]\{y\}$ of order at most $n+k-1$ such that
\begin{equation} \label{EQ:separant}
F^{(k)}=S_F \cdot y^{(n+k)} + R_{k},
\end{equation}
where $S_F=\frac{\partial F}{\partial y^{(n)}}$ is the separant of $F$. 
\end{Lemma}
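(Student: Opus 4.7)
The plan is to proceed by induction on $k$, using the explicit form of the total derivative operator $D = \frac{\partial}{\partial x} + \sum_{i \geq 0} y^{(i+1)} \frac{\partial}{\partial y^{(i)}}$ acting on differential polynomials in $\K[x]\{y\}$.

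For the base case $k = 1$, I would apply $D$ to $F$ directly. Since $F$ has order $n$, only the terms $\frac{\partial F}{\partial y^{(i)}}\, y^{(i+1)}$ with $0 \leq i \leq n$ and $\frac{\partial F}{\partial x}$ appear, giving
\[
F' = \frac{\partial F}{\partial y^{(n)}}\, y^{(n+1)} + \left( \frac{\partial F}{\partial x} + \sum_{i=0}^{n-1} \frac{\partial F}{\partial y^{(i)}}\, y^{(i+1)} \right).
\]
The first summand is $S_F \cdot y^{(n+1)}$, and the parenthesized expression is a differential polynomial of order at most $n$, which can be taken as $R_1$.

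For the inductive step, assume $F^{(k)} = S_F \cdot y^{(n+k)} + R_k$ with $\ord(R_k) \leq n+k-1$. Apply $D$ to both sides via the Leibniz rule:
\[
F^{(k+1)} = S_F \cdot y^{(n+k+1)} + \bigl( S_F' \cdot y^{(n+k)} + R_k' \bigr).
\]
I then need to check that the bracketed remainder has order at most $n+k$. Since $\ord(S_F) \leq n$, the product $S_F' \cdot y^{(n+k)}$ has order exactly $n+k$; and since applying $D$ raises the order of any differential polynomial by at most $1$, $\ord(R_k') \leq n+k$. Setting $R_{k+1} := S_F' \cdot y^{(n+k)} + R_k'$ completes the induction.

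There is no real obstacle here — the only point to state carefully is that $D$ raises the order by exactly $1$ and that the newly appearing highest derivative enters linearly (because it comes uniquely from the $\frac{\partial}{\partial y^{(n+k-1)}}$-term of $D$ applied to the leading derivative of the previous step). Both facts are immediate from the formula for $D$, so the proof is essentially a one-page bookkeeping argument in the chain rule.
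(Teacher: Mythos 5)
Your proof is correct. The paper does not prove this lemma itself but simply cites Ritt's \emph{Differential Algebra}, and your induction on $k$ using the total derivative operator $D$ is precisely the standard bookkeeping argument found there, so there is nothing to add.
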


Based on Lemma~\ref{LEM:Ritt} and the reasonings in the beginning of the section, we have the following proposition, which is sometimes called Implicit Function Theorem for AODEs as a folklore.

\begin{Proposition} \label{PROP:ImplicitFunctionTheorem} 
Let $F \in \K[x]\{y\}$ be a differential polynomial of order $n$. 
Assume that $\tilde{\textbf{c}}=(c_0,c_1,\ldots,c_n) \in \V(\J_0(F))$ and $S_F(\tilde{\textbf{c}}) \neq 0$. 
For $k>0$, set $$c_{n+k} = -\frac{R_k(0,c_0,\ldots,c_{n+k-1})}{S_F(\tilde{\textbf{c}})},$$
where $R_k$ is specified in Lemma~\ref{LEM:Ritt}.
Then $y(x)=\sum_{i \geq 0} \frac{c_i}{i!}\,x^i$ is a solution of $F(y)=0$.
\end{Proposition}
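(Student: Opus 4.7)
The plan is to verify directly that the coefficients $c_i$ constructed by the recursion define a formal power series solution. From the discussion preceding Definition~\ref{DEF:jet} (together with Lemma~\ref{LEM:kthcoeff}), the series $y(x) = \sum_{i \geq 0} \frac{c_i}{i!}\,x^i$ satisfies $F(y(x)) = 0$ if and only if $F^{(k)}(0, \pi_{n+k}(\textbf{c})) = 0$ for every $k \geq 0$, where $\textbf{c} = (c_0, c_1, \ldots) \in \K^\N$. So the goal reduces to establishing this countable family of scalar equalities.

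For $k = 0$, the relation $F(0, \pi_n(\textbf{c})) = 0$ is exactly the hypothesis $\tilde{\textbf{c}} \in \V(\J_0(F))$. For $k \geq 1$, I would apply Ritt's formula (Lemma~\ref{LEM:Ritt}) to write $F^{(k)} = S_F \cdot y^{(n+k)} + R_k$, where $R_k$ has order at most $n+k-1$. The key observation is that the separant $S_F = \partial F / \partial y^{(n)}$ depends only on $x$ and on the jet variables $y, y', \ldots, y^{(n)}$, so its evaluation $S_F(0, c_0, \ldots, c_{n+k})$ coincides with $S_F(\tilde{\textbf{c}})$ and is independent of $k$. This is precisely where the hypothesis $S_F(\tilde{\textbf{c}}) \neq 0$ is used: it guarantees that the recursion defining $c_{n+k}$ is well-posed. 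Substituting into Ritt's formula then yields
\[
F^{(k)}(0, \pi_{n+k}(\textbf{c})) \;=\; S_F(\tilde{\textbf{c}})\cdot c_{n+k} \;+\; R_k(0, c_0, \ldots, c_{n+k-1}) \;=\; 0
\]
by the very definition of $c_{n+k}$. Since $R_k$ has order at most $n+k-1$, its evaluation depends only on coefficients $c_0, \ldots, c_{n+k-1}$ that have already been produced by earlier stages of the recursion, so nothing circular is happening.

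There is really no substantive obstacle here; the proof amounts to an unwinding of Lemma~\ref{LEM:Ritt} coupled with Lemma~\ref{LEM:kthcoeff}. The only point that warrants attention is verifying that $S_F$ is genuinely evaluated at $\tilde{\textbf{c}}$ (and not at some expression involving the higher-order coefficients that are yet to be defined), which follows immediately from the fact that $S_F$ is a differential polynomial of order at most $n$. Once the family of equalities $F^{(k)}(0, \pi_{n+k}(\textbf{c})) = 0$ is established for all $k \geq 0$, Lemma~\ref{LEM:kthcoeff} implies that every Taylor coefficient of $F(y(x))$ at $x=0$ vanishes, and hence $F(y(x)) = 0$, completing the argument.
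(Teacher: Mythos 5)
Your argument is correct and is essentially the proof the paper intends: the paper gives no separate proof, stating only that the proposition follows ``from Lemma~\ref{LEM:Ritt} and the reasonings in the beginning of the section,'' which is exactly the reduction to $F^{(k)}(0,\pi_{n+k}(\textbf{c}))=0$ for all $k$ via Lemma~\ref{LEM:kthcoeff} followed by the linear solve for $c_{n+k}$ in Ritt's formula. Your added remark that $S_F$ has order at most $n$ and hence evaluates to the constant $S_F(\tilde{\textbf{c}})$ independent of $k$ is a worthwhile explicit justification of a point the paper leaves tacit.
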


In the above proposition, if the initial value $\tilde{\textbf{c}}$ vanishes at the separant of $F$, we may expand $R_k$ in Lemma~\ref{LEM:Ritt} further in order to find formal power series solutions. 
In the forthcoming sections, we will develop this idea in a systematical way.

\section{Generalized separants} \label{sec-generalized-separants}
In~\cite{limonov2015generalized} the author presents an expansion formula for derivatives of $F$ with respect to $x$ showing that not only the highest occurring derivative appears linearly, also the second-highest one, third-highest one, and so on. 
This generalizes Lemma~\ref{LEM:Ritt} and refines~\cite{Denef1984}[Lemma 2.2].


\begin{Definition}[{See \cite[Definition~8]{limonov2015generalized}}] \label{DEF:generalizedSeparant}
For a differential polynomial $F \in \K[x]\{y\}$ of order $n \in \N$ and $k,i \in \N$, we define $$f_{i} = \begin{cases} \frac{\partial \,F}{\partial y^{(i)}}, &i = 0,\ldots,n;\\ 0, &\text{otherwise}; \end{cases}$$ and $$S_{F,k,i}=\sum_{j=0}^{i} \binom{k}{j}f_{n-i+j}^{(j)}.$$
We call $S_{F,k,i}$ the \textit{generalized separants} of $F$.
\end{Definition}

Note that $S_{F,k,0}$ coincides with the usual separant $\frac{\partial F}{\partial y^{(n)}}$ of $F$ and the order of $S_{F,k,i}$ is less than or equal to $n+i$.

\begin{Lemma} \label{THM:generalizedSeparantFormula}
Let $F \in \K[x]\{y\}$ be a differential polynomial of order $n \in \N$. 
Then for each $m \in \N$ and $k>2m$ there exists a differential polynomial $r_{n+k-m-1}$ with order less than or equal to $n+k-m-1$ such that 
\begin{equation} \label{EQ:SumForm}
	F^{(k)}=\sum_{i=0}^{m} S_{F,k,i}\,y^{(n+k-i)} + r_{n+k-m-1}.
\end{equation}
\end{Lemma}
\begin{proof}
See \cite{limonov2015generalized}[Corollary 1].
\end{proof}

Let $F \in \K[x]\{y\}$ of order $n$ and $k > 2m$. We define

\[
\mathcal{B}_{m}(k)=
\begin{bmatrix}
\binom{k}{0} & \binom{k}{1} & \ldots & \binom{k}{m}
\end{bmatrix},
\]
and 
\[
\mathcal{S}_{F,m}=
\begin{bmatrix}
f_n & f_{n-1} & f_{n-2} & \cdots & f_{n-m}\\
0 & f_n^{(1)} & f_{n-1}^{(1)} & \cdots & f_{n-m+1}^{(1)}\\
0 & 0 & f_n^{(2)} & \cdots & f_{n-m+2}^{(2)}\\
\vdots & \vdots & \vdots & \vdots & \vdots\\
0 & 0 & 0 & \cdots & f_n^{(m)}
\end{bmatrix},
\]
and
$$Y_m=
\begin{bmatrix}
y^{(m)}\\y^{(m-1)}\\\vdots\\y
\end{bmatrix}.
$$

We call $\mathcal{S}_{F,m}$ the \textit{$m$-th separant matrix of $F$}.

Then we can also represent formula~\eqref{EQ:SumForm} of Lemma~\ref{THM:generalizedSeparantFormula} as
\begin{align} \label{EQ:MatrixForm}
F^{(k)}=\mathcal{B}_{m}(k) \cdot \mathcal{S}_{F,m} \cdot Y_m^{(n+k-m)} + r_{n+k-m-1}.
\end{align}

It is straightforward to see that for $m \in \N$ and $\textbf{c} \in \K^{\N}$ the separant matrix $\mathcal{S}_{F,m}(\textbf{c})=0$ if and only if $S_{F,k,i}(\textbf{c})=0$ holds for all $0 \leq i \leq m$ and $k \in \N$.
In the following we will look for non-zero entries of $\mathcal{S}_{F,m}(\textbf{c})$ or equivalently, of $S_{F,k,i}(\textbf{c})$.

\begin{Remark} \label{rem-localnonvanishing}
Recall that a solution of $F(y)=0$ is called \textit{non-singular} if it does not vanish at the separant $S_F=\frac{\partial F}{\partial y^{(n)}}$.
If a formal power series $y(x)=\sum_{i \geq 0} \frac{c_i}{i!}\,x^i \in \K[[x]]$ is a non-singular solution of $F(y)=0$, then there exists $m \in \N$ such that $S_F^{(m)}(0,c_0,\ldots,c_{n+m}) \neq 0$. 
Hence, $\mathcal{S}_{F,m}(0,c_0,\ldots,c_{n+m}) \neq 0$.
\end{Remark}

The Remark~\ref{rem-localnonvanishing} is one of the main points in the proofs of the statements in~\cite{Denef1984}. 
Let us briefly describe the most important results related to our work:
\begin{itemize}
	\item In Theorem~2.7 in \cite{Denef1984}, the Strong Approximation Theorem, it is shown that for a sufficiently large number of given initial values the existence of a solution immediately follows.
	\item Theorem~3.1 in \cite{Denef1984} shows that the existence of a formal power series solution of (systems of) equation can be decided algorithmically. 
\end{itemize}
Both results are focusing on the existence of a formal power series solution. 
In fact, in the proof of Theorem 3.1 they use an inequality which essentially states that, for a given initial tuple $\textbf{c} \in \V(\J_{2m})$, there is a prolongation such that $S_{F,m}(\textbf{c}) \ne 0$. 
Hence, uniqueness of the solution is not given in the arising procedure. 

Another frequently used assumption in Differential Algebra is that formal power series solutions are assumed to be non-singular (see for example in the proof of~\cite{Denef1984}[Theorem 2.7]). 
Let us first recall that this assumption is not restrictive from a theoretical point of view.
\begin{Remark} \label{rem-regular}
Let $y(x)$ be a (non-constant) singular solution of $F=0$ where $F \in \K[x,y,\ldots,y^{(n)}]$. 
Then $y(x)$ is also a zero of the resultant $$F_1=\Res_{y^{(n)}}(F,S_F) \in \K[x,y,\ldots,y^{(n-1)}].$$
Now $y(x)$ might be again a singular solution of an irreducible component of $F_1$, say $\tilde{F}_1$ and hence, it is a zero of $F_2=\Res_{y^{(n-1)}}(\tilde{F}_1,S_{\tilde{F}_1})$. 
Continuing this procedure, we obtain $n+1$ AODEs, say $F_1,F_2,\ldots,F_n,$ defined by
\begin{align*}
F_i=\Res_{y^{(n-i+1)}}(\tilde{F}_{i-1},S_{\tilde{F}_{i-1}}), \quad i=1,\ldots,n.
\end{align*}
These AODEs are of strictly decreasing orders and since $F_{n}$ is a polynomial in $x$ and $y$, the irreducible components do not have (non-trivial) singular solutions.
Hence, $y(x)$ is a non-singular solution of some $\tilde{F}_{i}$ with $1 \le i<n$.
\end{Remark}
Algorithmically the procedure from Remark~\ref{rem-regular} has a problem: In general it cannot be decided whether a solution is singular or non-singular. 
Denef and Lipshitz avoid this issue by additionally proposing the inequality $S_{F,m}(\textbf{c}) \ne 0$ in the given differential problem.

By Remark~\ref{rem-localnonvanishing} and Remark~\ref{rem-regular}, for a given differential equation $F=0$, for every formal power series solution $y(x)=\sum_{i \geq 0} \frac{c_i}{i!}\,x^i$ there is $m \in \N$ and some differential polynomial of $F=\tilde{F}_0,\tilde{F}_1,\ldots,\tilde{F}_{n}$ obtained from $F$ as above, such that $$S_{\tilde{F}_i,m}(0,c_0,\ldots,c_{n+m}) \ne 0.$$
Since this number $m$ is in principle unbounded, we focus in this work on the given equation $F=0$ and look for non-singular and singular solutions at the same time. 
This is an essential difference to the approach in~\cite{Denef1984}.

\section{Local vanishing order} \label{subsec-locally}
In this section, we consider the problem of deciding when a solution modulo a certain power of $x$ of a given AODE can be extended to a full formal power series solution. 
By using Lemma~\ref{THM:generalizedSeparantFormula}, we present a partial answer for this problem. 
In particular, given a certain number of coefficients satisfying some additional assumptions, we propose an algorithm to check whether there is a formal power series solution whose first coefficients are the given ones, and in the affirmative case, compute all of them (see Theorem~\ref{THM:local} and Algorithm~\ref{ALGO:localnonvanishing}).

Let us start with a technical lemma which we will frequently use later.

\begin{Lemma} \label{LEM:2mvanish}
Let $m,n \in \N$ and $F \in \K[x]\{y\}$ be a differential polynomial of order $n$ and $\textbf{c} = (c_0,c_1,\ldots)$ be a infinite tuple of indeterminates. 
Assume that the generalized separants $S_{F,k,i}(\textbf{c})=0$ for all $0 \leq i<m$.
Then the differential polynomial $F^{(k)}(\textbf{c})$ involves only
\begin{enumerate}
	\item $c_0,\ldots,c_{n+\lfloor k/2 \rfloor}$ for $0 \leq k \leq 2m$;
	\item $c_0,\ldots,c_{n+k-m}$ for $k>2m$.
\end{enumerate}
\end{Lemma}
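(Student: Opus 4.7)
The plan is to derive both cases from Theorem~\ref{THM:generalizedSeparantFormula} by applying it with different values of the parameter $m$ (let me call it $m'$ to distinguish it from the fixed $m$ of the lemma). The basic observation is that whenever $m' < m$ and $k > 2m'$, Theorem~\ref{THM:generalizedSeparantFormula} gives
\[
F^{(k)} = \sum_{i=0}^{m'} S_{F,k,i}\,y^{(n+k-i)} + r_{n+k-m'-1},
\]
and, under the hypothesis $S_{F,k,i}(\textbf{c})=0$ for all $0 \leq i < m$, every separant $S_{F,k,i}(\textbf{c})$ with $0 \leq i \leq m' \leq m-1$ vanishes. Hence $F^{(k)}(\textbf{c}) = r_{n+k-m'-1}(\textbf{c})$, which involves only $c_0,\ldots,c_{n+k-m'-1}$. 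The game is therefore to pick the largest admissible $m'$ in each regime.

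For part (2), when $k > 2m$, I would take $m' = m$ directly. Theorem~\ref{THM:generalizedSeparantFormula} is applicable, the terms with $i < m$ in the sum vanish at $\textbf{c}$, and only the $i = m$ term (which is not assumed to vanish) and the remainder survive, yielding
\[
F^{(k)}(\textbf{c}) = S_{F,k,m}(\textbf{c})\,c_{n+k-m} + r_{n+k-m-1}(\textbf{c}),
\]
which involves only $c_0,\ldots,c_{n+k-m}$, exactly as claimed.

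For part (1), when $1 \leq k \leq 2m$, I would set $m' = \lfloor (k-1)/2 \rfloor$. Then $2m' < k$, so Theorem~\ref{THM:generalizedSeparantFormula} applies, and $m' \leq m-1 < m$ because $k \leq 2m$; thus the entire separant sum vanishes at $\textbf{c}$, giving $F^{(k)}(\textbf{c}) = r_{n+k-m'-1}(\textbf{c})$. A short case split on the parity of $k$ verifies the arithmetic identity $n + k - m' - 1 = n + \lfloor k/2 \rfloor$, which is the desired bound. The trivial edge case $k = 0$ is handled separately: $F^{(0)} = F$ has order $n = n + \lfloor 0/2 \rfloor$ by definition, with no hypothesis needed.

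I do not expect a serious obstacle here; the substance of the lemma is packed into Theorem~\ref{THM:generalizedSeparantFormula}, and what remains is the combinatorial trick of picking the optimal $m'$. The only mildly delicate point is the bookkeeping around $\lfloor k/2 \rfloor$ versus $\lfloor (k-1)/2 \rfloor$ and making sure the strict inequality $k > 2m'$ is satisfied for the chosen $m'$ in every parity case; once that parity check is done the rest of the argument is immediate.
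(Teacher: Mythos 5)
Your proposal is correct and follows essentially the same route as the paper: both cases are obtained by applying Theorem~\ref{THM:generalizedSeparantFormula} with the largest admissible parameter (your $m'=\lfloor (k-1)/2\rfloor$ is exactly the paper's $\tilde{m}-1$ with $\tilde{m}=k-\lfloor k/2\rfloor$) and letting the hypothesis kill the separant terms with index below $m$. If anything, your explicit parity check of the condition $k>2m'$ is slightly more careful than the paper's bookkeeping.
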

\begin{proof}
Assume that $0 \leq k \leq 2m$.  Set $\tilde{m}= k-\lfloor k/2 \rfloor$. 
Then $k>2 \tilde{m}-1$. By assumption and Theorem~\ref{THM:generalizedSeparantFormula}, we have 
$$F^{(k)}(\textbf{c})= r_{n+k-\tilde{m}}(\textbf{c}),$$ where $r_{n+k-\tilde{m}}$ is a differential polynomial of order at most $n+k-\tilde{m}=n+\lfloor k/2 \rfloor$.
Thus, item 1 follows.

Let $k>2m$. 
By~\eqref{EQ:MatrixForm} and the assumption, we get 
$$F^{(k)}(\textbf{c})= S_{F,k,m}(\textbf{c})\,c_{n+k-m} + r_{n+k-m-1}(\textbf{c}),$$ and thus item 2 holds.
\end{proof}

\begin{Definition} \label{DEF:LocalVanishing}
Let $m,n \in \N$ and $F \in \K[x]\{y\}$ be a differential polynomial of order $n$. 
Let $\tilde{\textbf{c}}=(c_0,\dots,c_{n+m}) \in \K^{n+m+1}$. 
We say that $F$ has \textit{vanishing order $m$ at $\tilde{\textbf{c}}$} if the following conditions hold:
\begin{enumerate}
	\item \label{DEF:LocalVanishingItem1} $\mathcal{S}_{F,i}(\tilde{\textbf{c}}) = 0$ for all $0 \leq i<m$, and $\mathcal{S}_{F,m}(\tilde{\textbf{c}}) \neq 0$;
	\item \label{DEF:LocalVanishingItem2} $\tilde{\textbf{c}} \in \V(\J_{2m}(F))$.
\end{enumerate}
\end{Definition}

As a consequence of Lemma~\ref{LEM:2mvanish} and item \ref{DEF:LocalVanishingItem1} of the above definition, $\V(\J_{2m}(F))$ can be seen as a subset of $\K^{n+m+1}$ and therefore item \ref{DEF:LocalVanishingItem2} is well-defined.

Let $m,n \in \N$ and $F \in \K[x]\{y\}$ be a differential polynomial of order $n$. 
Assume that $\textbf{c}=(c_0,\ldots,c_{n+m}) \in \K^{n+m+1}$ and $F$ has vanishing order $m$ at $\textbf{c}$. 
We regard $S_{F,t,m}(\textbf{c})$ as a polynomial in $t$ and denote
\begin{align*}
\textbf{r}_{F,\textbf{c}}&= \text{the number of integer roots of } S_{F,t,m}(\textbf{c}) \text{ which are greater than } 2m,\\
\textbf{q}_{F,\textbf{c}}&= 
\begin{cases}
\text{the largest integer root of } S_{F,t,m}(\textbf{c}), & \text{ if } \textbf{r}_{F,\textbf{c}} \geq 1,\\
2m, & \text{ otherwise.}
\end{cases}
\end{align*}

\begin{Theorem} \label{THM:local}
Let $F \in \K[x]\{y\}$ be of vanishing order $m \in \N$ at $\textbf{c} \in \K^{n+m+1}$.
\begin{enumerate}
	\item Then $\textbf{c}$ can be extended to a formal power series solution of $F(y)=0$ if and only if it can be extended to a zero of $\J_{\textbf{q}_{F,\textbf{c}}}(F)$.
	\item Let \begin{align*}
	\mathcal{V}_{\textbf{c}}(F)= \pi_{n+\textbf{q}_{F,\textbf{c}}-m}(\{\tilde{\textbf{c}} \in \mathbb{V}(\mathcal{J}_{\textbf{q}_{F,\textbf{c}}}(F)) ~|~ \pi_{n+m}(\tilde{\textbf{c}}) = \textbf{c} \}).
	\end{align*}
	Then $\mathcal{V}_{\textbf{c}}(F)$ is an affine variety of dimension at most $\textbf{r}_{F,c}$. 
	Moreover, each point of it can be uniquely extended to a formal power series solution of $F(y)=0$.
\end{enumerate}
\end{Theorem}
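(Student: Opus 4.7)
The plan is to reduce everything to an explicit linear recursion. From $\mathcal{S}_{F,i}(\textbf{c}) = 0$ for all $i < m$ one reads off that each entry $f_{n-i+j}^{(j)}(\textbf{c})$ appearing in $S_{F,k,i} = \sum_{j=0}^{i} \binom{k}{j} f_{n-i+j}^{(j)}$ vanishes, so $S_{F,k,i}(\textbf{c}) = 0$ for every $k$ and every $i < m$. By Theorem~\ref{THM:generalizedSeparantFormula}, any extension $\tilde{\textbf{c}}$ of $\textbf{c}$ therefore satisfies, for every $k > 2m$,
\begin{align*}
F^{(k)}(\tilde{\textbf{c}}) \;=\; S_{F,k,m}(\textbf{c})\, c_{n+k-m} + r_{n+k-m-1}(c_0, \ldots, c_{n+k-m-1}).
\end{align*}
Because $\mathcal{S}_{F,m}(\textbf{c}) \neq 0$ while $\mathcal{S}_{F,m-1}(\textbf{c}) = 0$, at least one of the coefficients $f_{n-m+j}^{(j)}(\textbf{c})$ of the polynomial $S_{F,t,m}(\textbf{c})$ in $t$ is nonzero, so $S_{F,t,m}(\textbf{c})$ has only finitely many integer roots; by definition of $\textbf{q}_{F,\textbf{c}}$, we have $S_{F,k,m}(\textbf{c}) \neq 0$ for every integer $k > \textbf{q}_{F,\textbf{c}}$.

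For Part~1 the forward direction is immediate from Definition~\ref{DEF:jet}, since any formal power series solution yields a point of $\V(\J_{k}(F))$ for every $k$. For the backward direction---which also supplies the uniqueness statement of Part~2---I would start with any $\tilde{\textbf{c}} \in \V(\J_{\textbf{q}_{F,\textbf{c}}}(F))$ extending $\textbf{c}$ and use the displayed recursion to solve uniquely for $c_{n+k-m}$ in terms of $c_0, \ldots, c_{n+k-m-1}$ for each $k > \textbf{q}_{F,\textbf{c}}$. Iterating produces a unique tuple $\textbf{c}^{\infty} \in \K^{\N}$ with $F^{(k)}(\pi_{n+k}(\textbf{c}^{\infty})) = 0$ for all $k \in \N$, whose associated formal power series is the desired solution.

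For Part~2 I would parameterize $\mathcal{V}_{\textbf{c}}(F)$ by the ``free'' coordinates. Let $R$ collect the integer roots of $S_{F,t,m}(\textbf{c})$ in $(2m, \textbf{q}_{F,\textbf{c}}]$, so $|R| = \textbf{r}_{F,\textbf{c}}$. Running $k$ through $2m+1, \ldots, \textbf{q}_{F,\textbf{c}}$: if $k \notin R$, the displayed recursion forces $c_{n+k-m}$ to be a polynomial in the already-determined coordinates; if $k \in R$, the equation $F^{(k)} = 0$ degenerates to a polynomial constraint $r_{n+k-m-1}(\ldots) = 0$ on the already-determined coordinates, while $c_{n+k-m}$ itself becomes an unconstrained parameter. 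Lemma~\ref{LEM:2mvanish} further guarantees that $c_{n+\textbf{q}_{F,\textbf{c}}-m+1}, \ldots, c_{n+\textbf{q}_{F,\textbf{c}}}$ do not appear in any defining equation of $\V(\J_{\textbf{q}_{F,\textbf{c}}}(F))$, so the fiber above $\textbf{c}$ splits as a product with $\K^{m}$ and its projection $\mathcal{V}_{\textbf{c}}(F)$ is precisely the affine variety cut out by the polynomial relations described above. The projection onto the $\textbf{r}_{F,\textbf{c}}$ free coordinates is bijective with polynomial inverse, so $\dim \mathcal{V}_{\textbf{c}}(F) \le \textbf{r}_{F,\textbf{c}}$.

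The main obstacle I anticipate is the bookkeeping in this last step: one must verify inductively that, for every $k \in R$, the ``constraint'' $r_{n+k-m-1}(\ldots) = 0$ involves only coordinates already expressed as polynomials in the $\textbf{r}_{F,\textbf{c}}$ free parameters, thereby descending to a polynomial condition on $\K^{\textbf{r}_{F,\textbf{c}}}$, and one must justify that the resulting polynomial parameterization is an isomorphism onto its image so that the final dimension count is legitimate. Everything else follows mechanically from Theorem~\ref{THM:generalizedSeparantFormula} and Lemma~\ref{LEM:2mvanish}.
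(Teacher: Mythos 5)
Your proposal is correct and follows essentially the same route as the paper's proof: both reduce to the linear recursion $F^{(k)} = S_{F,k,m}(\textbf{c})\,c_{n+k-m} + r_{n+k-m-1}$ from Theorem~\ref{THM:generalizedSeparantFormula}, use $S_{F,k,m}(\textbf{c}) \neq 0$ for $k > \textbf{q}_{F,\textbf{c}}$ to extend uniquely, and bound the dimension by injecting $\mathcal{V}_{\textbf{c}}(F)$ into $\K^{\textbf{r}_{F,\textbf{c}}}$ via the coordinates indexed by the integer roots of $S_{F,t,m}(\textbf{c})$. Your explicit justification that $S_{F,t,m}(\textbf{c})$ is a nonzero polynomial in $t$ (from $\mathcal{S}_{F,m}(\textbf{c}) \neq 0$ and the linear independence of the $\binom{t}{j}$) is a detail the paper leaves implicit, and is a welcome addition.
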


\begin{proof}
1. Let $\bar{\textbf{c}}=(c_0,c_1,\ldots) \in \K^{\N}$ be such that $\pi_{n+m}(\bar{\textbf{c}})=\textbf{c}$, where $c_k$ is to be determined for $k>n+m$.
Recall that $\sum_{i \geq 0} \frac{c_i}{i!} \, x^i$ is a solution of $F(y)=0$ if and only if $F^{(k)}(\bar{\textbf{c}}) = 0$ for each $k>2m$. 
Since $F$ has a vanishing order $m$, and by Theorem~\ref{THM:generalizedSeparantFormula}, there is a differential polynomial $r_{n+k-m-1}$ of order at most $n+k-m-1$ such that
\begin{align} \label{EQ:equivalent2}
F^{(k)}(\bar{\textbf{c}}) = S_{F,k,m}(\textbf{c})\,c_{n+k-m}+r_{n+k-m-1}(\bar{\textbf{c}})=0.
\end{align}

If $\textbf{c}$ can be extended to a solution of $F(y) = 0$, then it follows from Definition~\ref{DEF:jet} that $\textbf{c}$ can be extended to a zero of $\mathcal{J}_{\textbf{q}_{F,\textbf{c}}}(F)$. 
Vice versa, if $\textbf{c}$ can be extended to a zero of $\mathcal{J}_{\textbf{q}_{F,\textbf{c}}}(F)$, then there exist $c_{n+m+1},\ldots,c_{n+\textbf{q}_{F,\textbf{c}}-m} \in \K$ such that $F^{(k)}(\bar{\textbf{c}}) =0$ for $k=2m,\ldots,\textbf{q}_{F,\textbf{c}}$. 
For $k > \textbf{q}_{F,\textbf{c}}$, we set 
\begin{equation} \label{EQ:equivalent3}
c_{n+k-m}=-\frac{r_{n+k-m-1}(0,c_0,\ldots,c_{n+k-m-1})}{S_{F,k,m}(\textbf{c})}
\end{equation}
and thus $y(x)=\sum_{i \geq 0} \frac{c_i}{i!} \, x^i$ is a solution of $F(y) = 0$.

2. By item 1, $\mathcal{V}_{\textbf{c}}(F)$ is the set of points satisfying
\begin{itemize}
	\item[(i)] $\tilde{c}_k = c_k  \ \ \text{ for } 0 \leq k \leq n+m;$
	\item[(ii)] $S_{F,k,m}(\textbf{c})\,\tilde{c}_{n+k-m}+r_{n+k-m-1}(\tilde{c}_0,\ldots,\tilde{c}_{n+k-m-1})=0 \ \ \text{ for } 2m <k \leq \textbf{q}_{F,\textbf{c}},$
\end{itemize}
and therefore it is an affine variety.

If $\textbf{r}_{F,\textbf{c}}=0$, then $\textbf{q}_{F, \textbf{c}}=2m$ and thus $\mathcal{V}_{\textbf{c}}(F) = \{\textbf{c}\}$ contains one point exactly.

Assume that $\textbf{r}_{F, \textbf{c}} \geq 1$. 
Let $k_1 < \cdots < k_{\textbf{r}_{F,\textbf{c}}} = \textbf{q}_{F, \textbf{c}}$ be integer roots of $S_{F,t,m}(\textbf{c})$ which are greater than $2m$.
If $k \notin \{k_1,\ldots,k_{\textbf{r}_{F,\textbf{c}}}\}$, then it follows from \eqref{EQ:equivalent2} that $\tilde{c}_{n+k-m}$ is uniquely determined from the previous coefficients and
\[
\begin{array}{cccc}
\phi: & \mathcal{V}_{\textbf{c}}(F)  & \longrightarrow & \K^{\textbf{r}_{F, \textbf{c}}} \\
&  \tilde{\textbf{c}} & \longmapsto & (\tilde{c}_{n+k_1-m},\ldots,\tilde{c}_{n+k_{\textbf{r}_{F,c}}-m})
\end{array}
\]
defines an injective map. 
Therefore, we conclude that $\mathcal{V}_{\textbf{c}}(F)$ is of dimension at most~$\textbf{r}_{F, \textbf{c}}$. 
Moreover, it follows from item 1 that each point of $\mathcal{V}_{\textbf{c}}(F)$ can be uniquely extended to a formal power series solution of $F(y) = 0$. 
\end{proof}

The proof of the above theorem is constructive. 
More precisely, if a tuple $\textbf{c} \in \K^{n+m+1}$ satisfies the condition that $F$ has vanishing order $m$ at $\textbf{c}$, then the proof gives an algorithm to decide whether $\textbf{c}$ can be extended to a formal power series solution of $F(y)=0$ or not, and in the affirmative case determine all of them. 
We summarize them as in Algorithm~\ref{ALGO:localnonvanishing}. 

\begin{algorithm}
	\caption{DirectMethodLocal}
	\label{ALGO:localnonvanishing}
	\begin{algorithmic}[1]
		\REQUIRE $\ell \in \N$, $\textbf{c}=(c_0,\ldots,c_{n+m}) \in \K^{n+m+1}$, and a differential polynomial $F$ of order $n$ which has vanishing order $m$ at $\textbf{c}$.
		\ENSURE All formal power series solutions of $F(y) = 0$, with $\textbf{c}$ as initial tuple, described as the truncation of the series up to order $\ell$ including a finite number of indeterminates and the algebraic conditions on these indeterminates. 
		The truncations are in one-to-one correspondence to the formal power series solutions.
		\STATE Compute $S_{F,k,m}(\textbf{c}), \textbf{r}_{F,\textbf{c}}, \textbf{q}_{F,\textbf{c}}$ and the defining equations of $\mathcal{V}_{\textbf{c}}(F)$:
		\begin{itemize}
			\item[(i)] $\tilde{c}_k = c_k  \ \ \text{ for } 0 \leq k \leq n+m;$
			\item[(ii)] $S_{F,k,m}(\textbf{c})\,\tilde{c}_{n+k-m}+r_{n+k-m-1}(\tilde{c}_0,\ldots,\tilde{c}_{n+k-m-1})=0 \ \ \text{ for } 2m <k \leq \textbf{q}_{F,\textbf{c}},$
		\end{itemize}
		where $\tilde{c}_k$'s are indeterminates.
		\STATE Check whether $\mathcal{V}_{\textbf{c}}(F)$ is empty or not by using Gr\"{o}bner bases. 
		\IF{$\mathcal{V}_{\textbf{c}}(F)=\emptyset$}
		\STATE Output the string ``$\textbf{c}$ can not be extended to a formal power series solution of $F(y) = 0$''. 
		\ELSE \STATE Compute $\tilde{c}_{n+\textbf{q}_{F,\textbf{c}}-m+1},\ldots,\tilde{c}_{\ell}$ by using \eqref{EQ:equivalent3}.
		\RETURN $\sum_{i=0}^\ell \frac{\tilde{c}_i}{i!}\,x^i$ and $\mathcal{V}_{\textbf{c}}(F)$.
		\ENDIF
	\end{algorithmic}
\end{algorithm}

The termination of the above algorithm is evident. 
The correctness follows from Theorem~\ref{THM:local}.

\begin{Example} \label{EX:localfps1}
Consider the following AODE of order two: $$F = x\,y'' - 3y' + x^2y^2 = 0.$$
Let $\textbf{c} = (c_0,0,0,2c_0^2) \in \pi_{3}(\mathbb{V}(\mathcal{J}_2(F)))$, where $c_0$ is an arbitrary constant in $\K$. 
One can verify that each point of $\pi_{3}(\mathbb{V}(\mathcal{J}_2(F)))$ is of the form $\textbf{c}$. 
A direct calculation shows that $F$ has vanishing order $1$ at $\textbf{c}$. 
Moreover, we have that $S_{F,t,1}(\textbf{c}) = t-3$ and $\textbf{q}_{F,\textbf{c}}= 3$.
Thus, it follows that $$\mathcal{V}_{\textbf{c}}(F) = \{ \tilde{\textbf{c}} = (c_0,0,0,2c_0^2,c_4) \in \K^5 \mid c_4 \in \K \} .$$ 
So, the dimension of $\mathcal{V}_{\textbf{c}}(F)$ is equal to one and the corresponding formal power series solutions are 
\begin{equation}\label{eq:solution}
y(x) \equiv c_0 + \frac{c_0^2}{3}\,x^3 + \frac{c_4}{24}\,x^4 - \frac{c_0^3}{18}\,x^6 - \frac{c_0\,c_4}{252}\,x^7 - \frac{c_0^2\,c_4}{3024}\,x^{10} \mod x^{11}.
\end{equation}
Above all, the set of formal power series solutions of $F(y) = 0$ at the origin is in bijection with $\K^2$  and can be represented as in~\eqref{eq:solution}.
\end{Example}

\section{Global vanishing order} \label{subsec-globally}
The input specification of Algorithm~\ref{ALGO:localnonvanishing} is that the given initial tuple $\textbf{c}$ is of length $n+m+1$ and that the differential polynomial $F$ has vanishing order $m$ at $\textbf{c}$. 
In general, the natural number $m$ can be arbitrarily large.
In this section, we give a necessary and sufficient condition for differential polynomials for which the existence for an upper bound of $m$ is guaranteed. 
If this condition is satisfied, Algorithm~\ref{ALGO:localnonvanishing} can be applied to every initial tuple of appropriate length.

\begin{Definition} \label{DEF:globalvanishing}
Let $F \in \K[x]\{y\}$ be a differential polynomial of order $n \in \N$. 
Assume that $\textbf{c}=(c_0,c_1,\ldots)$ is an infinite tuple of indeterminates and $m \in \N$.
\begin{enumerate}
	\item We define $\mathcal{I}_m(F) \subseteq \K[c_0,\ldots,c_{n+m}]$ to be the ideal generated by the entries of the separant matrix $\mathcal{S}_{F,m}(\textbf{c})$.
	\item We say that $F$ has \textit{vanishing order $m$} if $m$ is the smallest natural number such that $1 \in \I_m(F)+\J_{2m}(F) \subseteq \K[c_0,\ldots,c_{n+2m}]$, where $\J_{2m}(F)$ is the $2m$-th jet ideal of $F$. 
	If there does not exist such $m \in \N$, then we define the vanishing order of $F$ to be $\infty$.
\end{enumerate}
\end{Definition}

\begin{Proposition} \label{PROP:specialnonvanishing}
Every differential polynomial of the form $$A(x)\,y^{(m)}+B(x,y,\ldots,y^{(m-1)},y^{(m+1)},\ldots,y^{(n)})$$ has vanishing order of at most $\deg_x(A)+n-m$.
\end{Proposition}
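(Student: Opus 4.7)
The plan is to produce a nonzero constant of $\K$ inside the ideal $\mathcal{I}_M(F)$ for $M := \deg_x(A)+n-m$. By item~2 of Definition~\ref{DEF:globalvanishing}, this immediately yields $1 \in \mathcal{I}_M(F) + \mathcal{J}_{2M}(F)$ and hence that $F$ has vanishing order at most $M$. Note that the jet ideal $\mathcal{J}_{2M}(F)$ will play no role in the argument, so the bound should be expected to be non-sharp in general.

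The first step is to identify which entries of the separant matrix $\mathcal{S}_{F,M}$ come from the coefficient $A(x)$. Since $F = A(x)\,y^{(m)}+B$ with $B$ independent of $y^{(m)}$, one has $f_m = \partial F/\partial y^{(m)} = A(x)$, a polynomial in $x$ alone. Reading off the shape of $\mathcal{S}_{F,M}$ displayed in Section~\ref{sec-generalized-separants}, the entry in row $j$ and column $i$ (zero-indexed, with $i\geq j$) equals $f_{n-i+j}^{(j)}$; hence $f_m^{(j)} = A^{(j)}(x)$ appears as an entry whenever $i = n-m+j$, which is a legal column index precisely when $j \leq M-(n-m) = \deg_x(A)$. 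Passing to the generators of $\mathcal{I}_M(F)\subseteq\K[c_0,\ldots,c_{n+M}]$ via the evaluation $x=0$, $y^{(i)}=c_i$, each such entry collapses to the scalar $A^{(j)}(0)$, since $A^{(j)}$ depends only on $x$. Writing $A(x)=\sum_{i=0}^{d}a_i\,x^i$ with $d=\deg_x(A)$ and $a_d\neq 0$, the choice $j=d$ produces $A^{(d)}(0)=d!\,a_d$, a nonzero element of $\K$ lying in $\mathcal{I}_M(F)$. Therefore $1\in\mathcal{I}_M(F)$ and the proposition follows.

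The only delicate piece is the combinatorial bookkeeping of row/column indices that guarantees each of $A^{(0)},\ldots,A^{(d)}$ really appears as an entry of $\mathcal{S}_{F,M}$; once this is checked, the rest is forced by $a_d\neq 0$ and no Gr\"obner basis computation or deeper ideal-theoretic machinery is required. A brief sanity check at the extremes ($m=n$, where $M=\deg_x(A)$ and all $A^{(j)}$ lie on the diagonal of $\mathcal{S}_{F,M}$; and $\deg_x(A)=0$, where $M=n-m$ and $f_m=A(0)\in\K^{\times}$ appears directly) confirms that the index arithmetic behaves as claimed.
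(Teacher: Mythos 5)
Your proof is correct and follows essentially the same route as the paper: both exhibit the nonzero constant $\deg_x(A)!\,\lc(A)$ as an entry of the evaluated separant matrix $\mathcal{S}_{F,\deg_x(A)+n-m}(\textbf{c})$, so that $1\in\mathcal{I}_{\deg_x(A)+n-m}(F)$ and the bound on the vanishing order follows from Definition~\ref{DEF:globalvanishing}. Your index bookkeeping is in fact more careful than the paper's one-line argument, which writes $f_{n-m}$ where $f_m=\partial F/\partial y^{(m)}=A(x)$ is meant.
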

\begin{proof}
The separant matrix $\mathcal{S}_{F,\deg_x(A)+n-m}$ has the non-zero entry 
$$f_{m}^{(\deg_x(A))}= \deg_x(A)!\,\lc(A).$$
Hence, $1 \in \I_{\deg_x(A)+n-m} \subseteq \I_{\deg_x(A)+n-m}+\J_{2\deg_x(A)+n-m}.$
\end{proof}

Proposition~\ref{PROP:specialnonvanishing} is a generalization of~\cite{limonov2015generalized}[Corollary 2] where only the case $A(x) \in \K$ is treated. 
The following proposition gives a characterization of differential polynomials with finite vanishing order.

\begin{Lemma}\label{lemma:strong_approximation}
Let $F_1,\ldots,F_k$ be differential polynomials in $\mathbb{K}(x)\{y\}$.
Suppose that for every integer $n \geq 1$,  there exists $z(x) \in \mathbb{K}[[x]]$ such that 
$$F_1(z(x))=\ldots=F_k(z(x))=0 \mod x^n,$$
then there exists $\bar{z}(x) \in \mathbb{K}[[x]]$ such that 
$$F_1(\bar{z}(x))=\ldots=F_k(\bar{z}(x))=0.$$
\end{Lemma}
\begin{proof}
This lemma is a special case of the Strong Approximation Theorem~\cite[Theorem~2.10]{Denef1984}.
\end{proof}

\begin{Proposition} \label{PROP:charVanishing}
Let $F \in \K[x]\{y\}$ be a differential polynomial of order $n$. 
Then $F$ has finite vanishing order if and only if the differential system
\begin{equation}\label{sys:FdF}
F = \frac{\partial F}{\partial y} = \cdots =\frac{\partial F}{\partial y^{(n)}} =0
\end{equation}
has no solution in $\K[[x]]$.
In particular, if the differential ideal $\left[ F, \frac{\partial F}{\partial y}, \ldots, \frac{\partial F}{\partial y^{(n)}} \right]$ in $\K(x)\{y\}$ contains $1$, then $F$ has finite vanishing order.
\end{Proposition}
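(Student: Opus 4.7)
The approach is to interpret both sides of the equivalence geometrically. By Hilbert's Nullstellensatz (valid since $\K$ is algebraically closed), $F$ has finite vanishing order exactly when the affine variety $V_m := \V(\I_m(F) + \J_{2m}(F)) \subseteq \K^{n+2m+1}$ is empty for some $m$; equivalently, infinite vanishing order means every $V_m$ is non-empty. Dropping the top two coordinates gives projection morphisms $V_{m+1} \to V_m$ (the generators of $\I_m + \J_{2m}$ remain present at the next level), so the $V_m$ form an inverse system. By Lemma~\ref{LEM:kthcoeff}, a point of $\varprojlim V_m$ corresponds exactly to a formal power series $y(x) = \sum_{i \ge 0} (c_i/i!)\,x^i \in \K[[x]]$ for which all formal derivatives $F^{(k)}(y)$ and $(\partial F/\partial y^{(i)})^{(j)}(y)$ vanish at $x=0$, i.e., to a solution of the system $F = \partial F/\partial y = \cdots = \partial F/\partial y^{(n)} = 0$, because the entries of the separant matrix $\mathcal{S}_{F,m}$ are precisely such derivatives of the partials.

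The forward direction is then immediate: a $\K[[x]]$-solution of the system produces via Lemma~\ref{LEM:kthcoeff} a coefficient tuple whose truncation lies in every $V_m$, ruling out finite vanishing order. For the converse I assume every $V_m$ is non-empty and aim to produce a point in $\varprojlim V_m$. For each $m$, the images $\pi_{m,m'}(V_{m'})$ are constructible in $V_m$ by Chevalley's theorem, and the descending chain of their Zariski closures stabilizes, by Noetherianity of $V_m$, to a non-empty closed subset $W_m \subseteq V_m$. Since $W_{m+1} \supseteq \pi_{m+1,m'}(V_{m'})$ forces $\pi_m(W_{m+1}) \supseteq \pi_{m,m'}(V_{m'})$ for all large $m'$, the induced map $W_{m+1} \to W_m$ has Zariski-dense image. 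Applying König's lemma to the finite irreducible decompositions of the $W_m$ then extracts a compatible chain $W_m^{\ast}$ of irreducible subvarieties with dominant transition morphisms.

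The main obstacle is upgrading this chain of irreducibles to a coherent sequence of $\K$-points: dominance only guarantees that the image contains a non-empty Zariski open, so a $\K$-point chosen at one level need not lie in the image from the next. For uncountable algebraically closed $\K$, a Baire-type argument through the countable family of dense opens $U_m \subseteq W_m^{\ast}$ produces the sequence directly. For general $\K$ a more delicate argument is required: one can work over the direct-limit function field $L := \varinjlim_m K(W_m^{\ast})$ to obtain a solution of the system in $L[[x]]$ and then arrange a compatible specialization back to $\K$ using that each $W_m^{\ast}$ is of finite type over $\K$; making this precise is the main technical step. Finally, the ``in particular'' clause admits a direct proof independent of the inverse-limit argument: clearing denominators of a witnessing identity $1 = \sum_i a_i\,H_i^{(k_i)}$ in $\K(x)\{y\}$, with each $H_i$ a generator of the differential ideal, gives $d(x) = \sum_i \tilde a_i\,H_i^{(k_i)}$ in $\K[x]\{y\}$ with $0 \ne d(x) \in \K[x]$; differentiating $\mathrm{ord}_0 d$ times in $x$ and evaluating at $x=0$ realizes a non-zero constant as a polynomial combination of the finitely many $H_i^{(k_i)}(0,\textbf{c})$, all of which lie in $\I_m + \J_{2m}$ for $m$ sufficiently large, so $1 \in \I_m + \J_{2m}$ and $F$ has finite vanishing order.
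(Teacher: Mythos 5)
Your forward direction and your geometric dictionary (finite vanishing order $\Leftrightarrow$ some $V_m=\V(\I_m(F)+\J_{2m}(F))$ empty, via the Nullstellensatz; points of $\varprojlim V_m$ $\Leftrightarrow$ power series solutions of the system~\eqref{sys:FdF}) match the paper's proof. The genuine gap is exactly where you flag it: producing a $\K$-point of $\varprojlim V_m$ from the non-emptiness of every $V_m$. Your Chevalley/Noetherian-stabilization/K\"onig argument, followed by a Baire-type intersection of dense opens, is the standard proof that an inverse limit of non-empty constructible sets is non-empty over an \emph{uncountable} algebraically closed field, but the proposition is stated for arbitrary algebraically closed $\K$ of characteristic zero, e.g.\ $\overline{\Q}$. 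Over a countable field the statement you are implicitly relying on is false for general inverse systems of non-empty affine varieties with coordinate-projection transition maps: take $V_m=\{(x,y_1,\dots,y_m): y_i(x-a_i)=1\}$ with $\{a_i\}$ an enumeration of $\overline{\Q}$; every $V_m$ is non-empty and closed, yet $\varprojlim V_m(\overline{\Q})=\emptyset$. So the ``specialization back to $\K$'' step cannot be completed by general algebraic geometry alone; one must use the differential structure of the jet ideals. This is precisely what the paper does: from $1\notin\I_{k+n}(F)+\J_{2(k+n)}(F)$ it extracts a truncated polynomial solving the system~\eqref{sys:FdF} modulo $x^k$ for every $k$, and then invokes the Strong Approximation Theorem of Denef and Lipshitz (\cite[Theorem 2.10]{Denef1984}) to upgrade these approximate solutions to an actual solution in $\K[[x]]$. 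That theorem is the missing ingredient in your argument, and it is not a routine technicality.

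Two smaller remarks. First, your proof of the ``in particular'' clause is correct and genuinely different from the paper's: the paper deduces it from the main equivalence (a witness $1=\sum a_{ij}H_i^{(j)}$ forbids any solution in $\K[[x]]$, hence finite vanishing order), whereas you clear denominators, differentiate $\ord_0 d$ times, and exhibit a non-zero constant directly inside $\I_m(F)+\J_{2m}(F)$ for explicit $m$; this bypasses the Strong Approximation Theorem entirely for that clause and even yields an effective bound on the vanishing order in terms of the witnessing identity, which is a nice bonus. Second, your identification of the entries of $\mathcal{S}_{F,m}$ with the quantities $(\partial F/\partial y^{(n-i)})^{(j)}(0,\pi_{n+m}(\mathbf{c}))$, and the resulting inclusions $\I_m+\J_{2m}\subseteq\I_{m+1}+\J_{2m+2}$ making the $V_m$ into an inverse system, are correct. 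But as it stands the converse direction of the equivalence is not proved.
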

\begin{proof}
Assume that the system~\eqref{sys:FdF} has a solution $y(x)=\sum_{i \geq 0} \frac{c_i}{i!} \, x^i \in \K[[x]]$. 
Then for every $m \in \N$ we have $(c_0,\ldots,c_{n+m}) \in \V(\mathcal{I}_m(F)+\mathcal{J}_{2m}(F))$.
Therefore, it follows from Hilbert's weak Nullstellensatz that $1 \notin \mathcal{I}_m(F)+\mathcal{J}_{2m}(F)$ and thus, $F$ does not have finite vanishing order.

Conversely, assume that $F$ does not have finite vanishing order,~\ie, for each $k \in \N$ we have $1 \notin \I_{k+n}(F)+\J_{2(k+n)}(F)$ and the ideals have a common root $\textbf{c}=(c_0,\ldots,c_{3n+2k}) \in \K^{3n+2k+1}$. 
In particular, we have for all $i=0,\ldots,k$
\begin{equation}\label{eq:Fc}
F^{(i)}(\textbf{c}) = \left(\frac{\partial F}{\partial y}\right)^{(i)}(\textbf{c}) = \ldots = \left(\frac{\partial F}{\partial y^{(n)}}\right)^{(i)}(\textbf{c}) = 0,
\end{equation}
and therefore, by Lemma~\ref{LEM:kthcoeff}, for $\tilde{y}(x)=\sum_{i=0}^{3n+2k} \frac{c_i}{i!} \, x^i$ and every $i=0,\ldots,k$ also
\begin{equation*}
[x^i]F(\tilde{y}(x)) = [x^i] \frac{\partial F}{\partial y} (\tilde{y}(x)) = \cdots = [x^i] \frac{\partial F}{\partial y^{(n)}} (\tilde{y}(x)) = 0.
\end{equation*}
Thus, $\tilde{y}(x)$ is a solution of $F=\frac{\partial F}{\partial y}=\cdots=\frac{\partial F}{\partial y^{(n)}}=0$ modulo $x^{k}$. 
Due to Lemma~\ref{lemma:strong_approximation}, we conclude that the system~\eqref{sys:FdF} admits a solution in $\K[[x]]$ and the equivalence is proven.

Assume that $1 \in \left[ F, \frac{\partial F}{\partial y}, \ldots, \frac{\partial F}{\partial y^{(n)}} \right]$. 
Then system~\eqref{sys:FdF} does not have a solution in $\K[[x]]$ and thus $F$ has finite vanishing order.
\end{proof}

\begin{Remark}
To test whether $1 \in \left[ F, \frac{\partial F}{\partial y}, \ldots, \frac{\partial F}{\partial y^{(n)}} \right]$ or not, one can use the {\tt ``RosenfeldGroebner''} command in the Maple package {\tt DifferentialAlgebra}.
\end{Remark}

Below are examples of differential polynomials with vanishing order for each $m \in \N \cup \{\infty\}$.


\begin{Example} \label{EX:vanishing}
Consider the following AODE $$F = xyy''-yy'+x(y')^2 = 0.$$
A direct computation implies that for each $m \in \N$, we have $$\I_m(F) + \J_{2m}(F) \subseteq \langle c_0,c_1,\dots,c_{2 m + 1} \rangle.$$
Therefore, it follows from item 2 of Definition~\ref{DEF:globalvanishing} that $F$ has infinite vanishing order.
Note that $F=\frac{\partial F}{\partial y}=\frac{\partial F}{\partial y'}=\frac{\partial F}{\partial y''}=0$ has a common solution $y(x)=0$.
\end{Example}

\begin{Example}
Assume that $m \in \N$. 
Consider the AODE $$F=\frac{(y'+y)^2}{2}+x^{2m} = 0.$$
For $m=0$, it is straightforward to see that $F$ has vanishing order $0$. 

Let $m>0$.
By computation, we find that $\frac{\partial F}{\partial y'} = \frac{\partial F}{\partial y} = y'+y$. 
Therefore, we have that 
\begin{equation} \label{EQ:mnonvanishing}
\mathcal{I}_m(F)=\langle c_1+c_0,\dots,c_{m+1}+c_m \rangle.
\end{equation}
For each $k \geq 0$, it is straightforward to see that $\left((y'+y)^2\right)^{(k)}$ is a $\K$-linear combination of terms of the form $(y^{(i)} + y^{(i + 1)}) (y^{(j)} + y^{(j+1)})$ with $i+j=k$, and $i,j \geq 0$. 
Therefore, we conclude that for each $0 \leq k \leq m - 1$, the jet ideal $\mathcal{J}_{2 k}(F)$ is contained in $\mathcal{I}_m(F)$. 
It implies that $$\mathcal{I}_{k}(F) + \mathcal{J}_{2 k}(F) \subseteq \mathcal{I}_m(F) \ \ \text{ for } 0 \leq k \leq m - 1.$$
By \eqref{EQ:mnonvanishing} and the above formula, we have $$1 \notin \mathcal{J}_{2 k}(F) + \mathcal{I}_{k}(F)  \ \ \text{ for } 0 \leq k \leq m - 1.$$
Furthermore, we have that $$F^{(2m)}(0, c_1, \ldots, c_{1 + 2m}) \equiv (2m)! \mod \mathcal{I}_{m}(F).$$
Thus, it follows that $$1 \in \mathcal{I}_{m}(F) + \mathcal{J}_{2 m}(F)$$ and by definition, $F$ has vanishing order $m$.
\end{Example}


\begin{Theorem} \label{THM:global}
Let $F \in \K[x]\{y\}$ be of order $n$ with vanishing order $m \in \N$ and $\textbf{c} \in \V(\mathcal{J}_{2m}(F))$.
\begin{enumerate}
	\item There exists $i \in \{0, \ldots, m \}$ such that $F$ has vanishing order $i$ at $\pi_{n+i}(\textbf{c})$.
	\item Let $M = \max\{2m+i,\textbf{q}_{F,\textbf{c}}\}$. 
	Then $\textbf{c}$ can be extended to a formal power series solution of $F(y) = 0$ if and only if it can be extended to a zero point of $\mathcal{J}_{M}(F)$.
	\item Let $$\mathcal{V}_{\textbf{c}}(F) = \pi_{n+M-i}(\{ \tilde{\textbf{c}} \in \V(\mathcal{J}_{M}(F)) ~|~ \pi_{n+2m}(\tilde{\textbf{c}}) = \textbf{c}\}).$$
	Then $\mathcal{V}_{\textbf{c}}(F)$ is an affine variety of dimension at most $\textbf{r}_{F,\textbf{c}}$. 
	Moreover, each point of it can be uniquely extended to a formal power series solution of $F(y) = 0$.
\end{enumerate}
Hence, the set of formal power series solutions of $F(y)=0$ around the origin is in bijection with 
the set $$\bigcup\limits_{\textbf{c} \in \V(\J_{2m}(F))} \mathcal{V}_{\textbf{c}}(F).$$
\end{Theorem}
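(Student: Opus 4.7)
My plan is to reduce the global statement to the local Theorem~\ref{THM:local} by first producing, from the global vanishing hypothesis, an index $i$ at which $F$ has \emph{local} vanishing order $i$, and then transporting the local conclusions through the projection $\pi_{n+i}$. The whole argument is essentially bookkeeping about which separant matrices vanish at $\textbf{c}$ and which jet ideals are affected.

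For item 1, I would argue as follows. Because $F$ has global vanishing order $m$ we have $1\in\I_m(F)+\J_{2m}(F)$, and because $\textbf{c}\in\V(\J_{2m}(F))$ this forces $\textbf{c}\notin\V(\I_m(F))$, \ie\ some entry of $\mathcal{S}_{F,m}$ does not vanish at $\textbf{c}$. Let $i\in\{0,\ldots,m\}$ be the smallest index with $\mathcal{S}_{F,i}(\pi_{n+i}(\textbf{c}))\neq 0$; by minimality $\mathcal{S}_{F,j}(\pi_{n+j}(\textbf{c}))=0$ for all $j<i$, so condition~\ref{DEF:LocalVanishingItem1} of Definition~\ref{DEF:LocalVanishing} holds at $\pi_{n+i}(\textbf{c})$. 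To verify condition~\ref{DEF:LocalVanishingItem2}, note that since $2i\le 2m$ we have $F^{(k)}(\textbf{c})=0$ for $k=0,\ldots,2i$, and Lemma~\ref{LEM:2mvanish} (applied with $m$ replaced by $i$, using the vanishing of the lower separant matrices) guarantees that each such $F^{(k)}(\textbf{c})$ depends only on $c_0,\ldots,c_{n+i}$; hence $\pi_{n+i}(\textbf{c})\in\V(\J_{2i}(F))$ and $F$ has vanishing order $i$ at $\pi_{n+i}(\textbf{c})$.

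For items 2 and 3, set $\tilde{\textbf{c}}=\pi_{n+i}(\textbf{c})$. Theorem~\ref{THM:local} applied to $\tilde{\textbf{c}}$ shows that $\tilde{\textbf{c}}$ extends to a formal power series solution of $F(y)=0$ iff it extends to a zero of $\J_{\textbf{q}_{F,\tilde{\textbf{c}}}}(F)$, and that the variety of such extensions is affine of dimension at most $\textbf{r}_{F,\tilde{\textbf{c}}}$ with every point uniquely giving a formal power series solution via the recursion~\eqref{EQ:equivalent3}. To pass from $\tilde{\textbf{c}}$ to the given $\textbf{c}$, one only needs to observe that the additional constraints $F^{(k)}(\textbf{c})=0$ for $2i<k\le 2m$ are precisely those imposed by requiring the extension to lie in $\V(\J_{2m}(F))$, and for $k$ beyond that the recursion~\eqref{EQ:equivalent3} already determines $c_{n+k-i}$ uniquely (unless $k$ is an integer root of $S_{F,t,i}(\tilde{\textbf{c}})$, contributing to $\textbf{r}_{F,\tilde{\textbf{c}}}$). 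Taking the cut-off $M=\max\{2m+i,\textbf{q}_{F,\tilde{\textbf{c}}}\}$ ensures we have incorporated both the given constraints from $\textbf{c}$ and the local constraints from Theorem~\ref{THM:local}, so items 2 and 3 follow by restricting the local variety to those extensions which project onto $\textbf{c}$. The bijection with $\bigcup_{\textbf{c}\in\V(\J_{2m}(F))}\mathcal{V}_{\textbf{c}}(F)$ is then immediate: every formal power series solution has its initial segment $(c_0,\ldots,c_{n+2m})$ in $\V(\J_{2m}(F))$, and the local theorem supplies a unique further extension in the corresponding $\mathcal{V}_{\textbf{c}}(F)$.

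The main obstacle I expect is not any single deep step but the careful tracking of indices and projections, in particular verifying that $\pi_{n+i}(\textbf{c})\in\V(\J_{2i}(F))$ is well-defined (\ie\ that $F^{(k)}(\textbf{c})$ for $k\le 2i$ really involves no $c_j$ with $j>n+i$) and reconciling the choice $M=\max\{2m+i,\textbf{q}_{F,\tilde{\textbf{c}}}\}$ with the fact that the local recursion starts at index $2i+1$ rather than $2m+1$. These are routine applications of Lemma~\ref{LEM:2mvanish} and the recursion~\eqref{EQ:equivalent2}, but the notation has to be handled with care to make sure no constraint from $\V(\J_{2m}(F))$ is overlooked when one passes to the local picture.
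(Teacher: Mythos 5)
Your proposal is correct and follows essentially the same route as the paper: item 1 is obtained by extracting the minimal $i$ with $\mathcal{S}_{F,i}(\textbf{c})\neq 0$ from $1\in\I_m(F)+\J_{2m}(F)$ and using Lemma~\ref{LEM:2mvanish} to see that $\pi_{n+i}(\textbf{c})\in\V(\J_{2i}(F))$ is well-defined, and items 2 and 3 are reduced to Theorem~\ref{THM:local} (the paper simply states that these proofs are ``literally the same''). Your additional bookkeeping reconciling the cut-off $M=\max\{2m+i,\textbf{q}_{F,\tilde{\textbf{c}}}\}$ with the local recursion is a correct elaboration of what the paper leaves implicit.
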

\begin{proof}
1. Since $\textbf{c} \in \V(\mathcal{J}_{2m}(F))$ and $F$ has vanishing order $m$, it follows that there exists a minimal $i \in \{0, \ldots, m \}$ such that 
$$\mathcal{S}_{F,i}(\textbf{c}) \neq 0.$$ 
and $F^{(k)}(\textbf{c}) = 0$ for $k=0,\ldots,2i$. 
Taking into account of item~2 of Lemma~\ref{LEM:2mvanish},  we see that only the first $n+i+1$ coefficients of $\textbf{c}$ are relevant and therefore $F$ has vanishing order $i$ at $\pi_{n+i}(\textbf{c})$.

2. and 3. The proofs are literally the same as those in Theorem~\ref{THM:local}.
\end{proof}

As a consequence of Theorem~\ref{THM:global}, for every AODE of finite order, say $m$, Algorithm~\ref{ALGO:localnonvanishing} can be applied to every given initial tuple $\textbf{c} \in \K^{n+m+1}$. 
So we can determine whether there exists a formal power series solution of $F(y)=0$ extending $\textbf{c}$ or not and in the affirmative case, all formal power series solutions can be described in finite terms.

Note that in case that $F$ has infinite vanishing order, there may exist an initial tuple $\textbf{c}$ with arbitrary size such that the set of formal power series solutions of $F(y)=0$ extending $\textbf{c}$ can not be described by an algebraic variety as item 3 of Theorem~\ref{THM:global}.
For instance, let us consider the differential polynomial $F=xyy''+yy'-x(y')^2$ from Example~\ref{EX:vanishing}.
For every $k \geq 1$, the set of all formal power series solutions of $F=0$ extending the zero initial tuple $\textbf{c}=\textbf{0} \in \mathbb{K}^k$ is the set $\{0\} \cup \{x^{r},\,|\,r \in \mathbb{N},\, r \geq k\}$.

In contrast to most other approaches, our method can be used to find singular solutions of AODEs as the following example illustrated. 
Hence, we avoid the algorithmic problem described in Remark~\ref{rem-regular}.
\begin{Example} \label{EX:singularnonvanishing}
Consider the AODE $$F=y'^2+y'-2y-x=0.$$ 
From Proposition~\ref{PROP:specialnonvanishing} we know that $F$ has vanishing order of at most $1$. 

Let $\textbf{c} = (-\frac{1}{8}, - \frac{1}{2}, 0, c_3)$, where $c_3$ is an arbitrary constant in $\K$. 
It is straightforward to verify that $\textbf{c}$ is a zero point of $\mathcal{J}_{2}(F)$. 
Furthermore, we have that $F$ has vanishing order $1$ at $\tilde{\textbf{c}} = \pi_{2}(\textbf{c}) = (-\frac{1}{8}, - \frac{1}{2}, 0)$. 
Therefore, we also know that $F$ has indeed vanishing order equal to $1$. 
We find that $S_{F,k,1}(\tilde{\textbf{c}})=-2$ and $M=3$. 
From item 2 of Theorem~\ref{THM:global}, we know that $\textbf{c}$ can be extended into a formal power series solution of $F(y) = 0$ if and only if it can be extended to a zero point of $\mathcal{J}_{3}(F)$. 
This is the case exactly for $c_3=0$. 
Hence, $\mathcal{V}_{\textbf{c}}(F) = \{ \textbf{c} \}$ and we can use Theorem~\ref{THM:global} to extend $\textbf{c}$ uniquely to the solution $$y_1(x) = -\frac{1}{8} - \frac{1}{2} x.$$
It is straightforward to verify that $y_1(x)$ is a singular solution of $F(y) = 0$.

Similarly, let $\tilde{\textbf{c}} = (-\frac{1}{8}, - \frac{1}{2}, 1, c_3)$, where $c_3$ is an arbitrary constant in $\K$.  
Using item 2 of Theorem~\ref{THM:global}, we deduce that $\tilde{\textbf{c}}$ can be extended into a formal power solution of $F(y) = 0$ if and only if $c_3 = 0$. 
In the affirmative case, $\mathcal{V}_{\tilde{\textbf{c}}}(F) = \{ \tilde{\textbf{c}} \}$ and we find that $$y_2(x) = -\frac{1}{8} - \frac{1}{2} x + \frac{1}{2} x^2$$ is the corresponding solution. 

Actually, one can verify that $y_1(x),y_2(x)$ are all the formal power series solutions of $F(y) = 0$ with $[x^0] S_F(y) = 0$. 
Therefore, the set of formal power series solutions of $F(y) = 0$ is equal to $$\{y_1(x),y_2(x) \} \cup \mathcal{S},$$ where $$\mathcal{S} = \{ y \in \K[[x]] ~|~ F(y) = 0 \, \text{ and } \, [x^0] S_F(y) \neq 0 \},$$ which can be determined by Proposition~\ref{PROP:ImplicitFunctionTheorem}.
\end{Example}

\section*{Acknowledgements}
The authors would like to thank Gleb Pogudin and Fran\c{c}ois Boulier for useful discussions.


\begin{thebibliography}{20}

\bibitem{Aschenbrenner2017} M. Aschenbrenner, L. Van den Dries, J, van der Hoeven. Asymptotic differential algebra and model theory of transseries (Vol. 358). Princeton University Press, 2017.

\bibitem{Bostan2007} A. Bostan, F. Chyzak, F. Ollivier, B. Salvy, E. Schost, A Sedoglavic. Fast computation of power series solutions of systems of differential equations. In: Proceedings of the 18th ACM-SIAM Symposium on Discrete Algorithms, pp. 1012--1021, New Orleans, Louisiana, U.S.A. (2007).
\bibitem{Brent1978} R.P. Brent, H.T. Kung. Fast algorithms for manipulating formal power series. J. ACM 25, 581--595 (1978).

\bibitem{BriotBouquet}
C.A. Briot and J.C. Bouquet.
\newblock Propri\'{e}t\'{e}s des fonctions d\'{e}finies par des \'{e}quations
  diff\'{e}rentielles.
\newblock {\em Journal de l'Ecole Polytechnique}, 36:133--198, 1856.

\bibitem{bruno2000power}
A.D. Bruno.
\newblock {\em Power geometry in algebraic and differential equations},
\newblock Elsevier, 2000.

\bibitem{bruno2004asymptotic}
A.D. Bruno.
\newblock Asymptotic behaviour and expansions of solutions of an ordinary differential equation. 
\newblock {\em Russian Mathematical Surveys}, 59(3):429--480, 2004.

\bibitem{Cano2005}
J.~Cano.
\newblock The newton polygon method for differential equations.
\newblock In {\em Proceedings of the 6th International Conference on Computer
  Algebra and Geometric Algebra with Applications}, IWMM'04/GIAE'04, pages
  18--30, Berlin, Heidelberg, 2005. Springer-Verlag.

\bibitem{CanoFortuny2009}
J.~Cano and P.~Fortuny.
\newblock {The Space of Generalized Formal Power Series Solution of an Ordinary
  Differential Equations}.
\newblock {\em Ast{\'e}risque}, 323:61--82, 2009.

\bibitem{CanoFalkSendra}
J.~Cano, S.~Falkensteiner, and J.R. Sendra.
\newblock Existence and convergence of puiseux series solutions for first order
  autonomous differential equations.
\newblock {\em preprint}, 2019.
\newblock online available under https://arxiv.org/abs/1908.09196.

\bibitem{DoraJung1997}
J.~Della~Dora and F.~Richard-Jung.
\newblock About the newton algorithm for non-linear ordinary differential
  equations.
\newblock In {\em Proceedings of the 1997 International Symposium on Symbolic
  and Algebraic Computation}, ISSAC '97, pages 298--304, New York, NY, USA,
  1997. ACM.

\bibitem{Dragovic2019}
V.~Dragovi\'{c} and I.~Goryuchkina.
\newblock {The Fine-Petrovi\'{c} polygons and the Newton-Puiseux method for
  algebraic ODEs}.
\newblock {\em Bulletin of the American Mathematical Society}, 2019.

\bibitem{Denef1984}
J.~Denef and L.~Lipshitz.
\newblock Power series solutions of algebraic differential equations.
\newblock {\em Mathematische Annalen}, 267:213--238, 1984.

\bibitem{FalkSendra}
S.~Falkensteiner and J.R. Sendra.
\newblock {Formal Power Series Solutions of First Order Autonomous Algebraic
  Ordinary Differential Equations}.
\newblock {\em Mathematics in Computer Science}, 2019 (to appear).

\bibitem{Fine1889}
H.~Fine.
\newblock On the functions defined by differential equations, with an extension
  of the {Puiseux Polygon} construction to these equations.
\newblock {\em Amer. Jour. of Math.}, XI:317--328, 1889.

\bibitem{Fischer1974} M.J. Fischer, L.J. Stockmeyer. Fast on-line integer multiplication. In: Proceedings of the 5th ACM Symposium Theory of Computing, 9:67--72, 1974.

\bibitem{GrSi:1991}
D.Y. Grigoriev and M.~Singer.
\newblock Solving {O}rdinary {D}ifferential {E}quations in {T}erms of {S}eries
  with {R}eal {E}xponents.
\newblock {\em Transactions A.M.S.}, 327:329--351, 1991.


\bibitem{Hoeven2019}
J.~van~der Hoeven. Computing with D-algebraic power series. Applicable Algebra in Engineering, Communication and Computing, 30(1):17--49, 2019.

\bibitem{Hoeven2014b} J. van der Hoeven. Effective power series computations. Technical report, HAL (2014) http://hal.archives-ouvertes.fr/hal-00979357

\bibitem{Hoeven2014a} J. van der Hoeven. Faster relaxed multiplication. In: Proceedings of the 39th International Symposium on Symbolic and Algebraic Computation (ISSAC'14), Kobe, Japan, pp. 405--412 (2014).

\bibitem{Hoeven2006} J. van der Hoeven. Transseries and real differential algebra (Vol. 1888). Springer, 2006.



\bibitem{Hurwitz1889}
A.~Hurwitz.
\newblock {Sur le d{\'e}veloppement des fonctions satisfaisant {\`a} une
  {\'e}quation diff{\'e}rentielle alg{\'e}brique}.
\newblock {\em Annales scientifiques de l'{\'E}cole Normale Sup{\'e}rieure},
  6(3):327--332, 1889.

\bibitem{KauersPaule2010}
M.~Kauers and P.~Paule.
\newblock {\em {The Concrete Tetrahedron}}.
\newblock Springer, Germany, 2010.

\bibitem{limonov2015generalized}
M.A. Limonov.
\newblock Generalized separants of differential polynomials.
\newblock {\em Moscow University Mathematics Bulletin}, 70(6):248--252, 2015.

\bibitem{ritt1950differential}
J.F. Ritt.
\newblock {\em Differential algebra}, volume~33.
\newblock American Mathematical Society, 1950.

\end{thebibliography}

\end{document}